\newcommand{\one}{\mathds{1}}
\newtheorem{thm}{Theorem}
\newtheorem{cor}[thm]{Corollary}
\newtheorem{defn}{Definition}
\begin{document}

\title{Entanglement Sharing Protocol via Quantum Error Correcting Codes}
\author{Ran Hee \surname{Choi}}
\affiliation{Institute for Quantum Science and Technology, University of Calgary, Alberta T2N 1N4, Canada}
\author{Ben Fortescue}
\affiliation{Institute for Quantum Science and Technology, University of Calgary, Alberta T2N 1N4, Canada}
\affiliation{Department of Physics, Southern Illinois University, IL, USA}
\author{Gilad Gour}
\affiliation{Institute for Quantum Science and Technology, University of Calgary, Alberta T2N 1N4, Canada}
\affiliation{Department of Mathematics and Statistics, University of Calgary, Alberta T2N 1N4, Canada}
\author{Barry C. Sanders}
\affiliation{Institute for Quantum Science and Technology, University of Calgary, Alberta T2N 1N4, Canada}

\begin{abstract}
We introduce a new multiparty cryptographic protocol,
which we call `entanglement sharing schemes',
wherein a dealer retains half of a maximally-entangled bipartite state
and encodes
the other half into a multipartite state that is distributed among multiple players.
In a close analogue to quantum secret sharing,
some subsets of players can recover
maximal entanglement with the dealer whereas other subsets can recover no entanglement (though they may retain classical correlations with the dealer).
We find a lower bound on the share size for such schemes and construct two non-trivial examples based on Shor's
$[[9,1,3]]$ and the $[[4,2,2]]$ stabilizer code; we further demonstrate how other examples may be obtained from
quantum error correcting codes through classical encryption.
Finally, we demonstrate that entanglement sharing schemes can be applied to characterize leaked information in quantum ramp secret sharing.
\end{abstract}

\pacs{03.67. Pp,03.67.Dd, 03.67.Hk, 03.67.Mn}

\maketitle

\section{Introduction}
Secret sharing is a well-studied multi-party cryptographic protocol
in the classical regime~\cite{Sha79,Bla79}
and the quantum regime,
which can correspond to either sharing classical messages via quantum channels~\cite{HBB99}
or sharing arbitrary quantum states~\cite{CGL99}
known as CQ and QQ versions, respectively~\cite{MS08};
we focus on QQ secret sharing here.
Multi-party cryptographic protocols typically are described as involving one party (the dealer)
possessing some information (the secret) which they distribute among the other parties (the players) in such a way that some
subsets of players can completely recover the secret, whereas other subsets of players are denied any knowledge
of the secret.
Secret sharing has numerous applications in cryptography including
secure multi-party computation, effecting the Byzantine agreement.
threshold cryptography, access control, and generalized oblivious transfer~\cite{Bei11}.
In the quantum regime, where the secret can be quantum information, quantum secret sharing (QSS) could  be applied
to quantum generalizations of applications for secret sharing,
such as distributed quantum computation~\cite{NMI01} or authorizing one party to access encrypted quantum communication.

Here we develop a theory of entanglement sharing schemes (ESS),
which is based on QSS including one dealer and
multiple players,
but we are concerned not with recovery of classical or quantum information
but rather with recovering shared entanglement between
the dealer and the players.
This secure protocol for sharing entanglement is important
as entanglement is a valuable communication resource when
only classical or degraded quantum channels are available between dealer and players after the initial sharing stage.
Entanglement is a crucial consumable resource for quantum communication
tasks such as quantum teleportation~\cite{BBC93}, super-dense coding~\cite{BW92}
and device-independent quantum key distribution~\cite{ABG07,MPA11}.

In entanglement
 sharing, the dealer, rather than encoding an arbitrary quantum state as in QSS, encodes half of a bipartite maximally-entangled state (MES), which is then
distributed to multiple players.
This distribution meets the requirement that the entanglement can be perfectly recovered between
the dealer and certain ``authorized'' subsets of players (the resultant shared entanglement being then usable as a resource).
Futhermore no entanglement
can be obtained between the dealer and other ``unauthorized'' subsets of players.
The ``access structure'' is the collection
of all authorized subsets, and the ``adversary structure'' is the collection of all unauthorized subsets.

The ESS, QSS and quantum error correcting coding (QECC) are interrelated as we shall discover in this paper.
Any QSS scheme can be modified to serve as an ESS,
and all ESSs can be derived from quantum error correcting codes.
In symbolic language we can write $\text{QSS}$.
Moreover quantum ramp secret sharing protocols (QRSSs) contain ESSs
and are contained in protocols corresponding to quantum error correcting codes.
Symbolically we can write
$$\widetilde{\text{QSS}}\subset\text{ESS}\subset\widetilde{\text{QRSS}}\subset\widetilde{\text{QECC}}$$
for the $\widetilde{}$ symbol denoting that the indicated scheme is actually its adaptation to ESS.

Entanglement sharing can be implemented using any QSS scheme: the dealer simply chooses the shared information to be half of an MES and proceeds as normal; the requirement for unauthorized subsets of the QSS scheme to have no information about the secret will ensure no shared entanglement with the dealer.  However, as we find, the differing requirements for ESSs
(in particular the ability for unauthorized subsets to share classical correlations with the dealer) allow for distinct schemes to be constructed.  We present examples of such schemes in later sections, and discuss how the use of classical encryption allows them to be constructed from error-correcting codes which may not naturally allow for such schemes.

\indent Finally we discuss the conceptual use of ESSs in the context of sharing arbitrary quantum states at a lower cost. In QSS
possessing perfect secrecy against information leakage, the size of shares allocated to each player must be at as large as the size of the secret~\cite{CGL99,GOT00}.
This limitation on the size of shares can impose large communication and storage costs.
QRSS is cheaper in the size of shares than quantum secret sharing, but it leaks some information to unauthorized sets of players, which are denoted as intermediate sets~\cite{OSIY05}. As the information leakage can compromise the secrecy of protocols, it is important to characterize the leaked information
and prevent the intermediate structure from learning any valuable information.\\

\section{Entanglement Sharing Schemes}
Our general form of an ESS is as follows:
suppose that a dealer~D initially prepares a quantum system
 in an MES $|S\rangle$.
 She retains half of the system, and encodes the other half into the shares of
 $n$ players $\text{P}=\{\text{P}_1,\text{P}_2,\dots,\text{P}_n\}$. Every subset of players must be either an authorized set or an
 unauthorized set. We denote as the recoverability condition that any authorized set of players can reconstruct $|S\rangle$ fully using
local operations (with respect to the player/dealer divide; they may perform joint quantum operations within their subset), and as the secrecy condition that any unauthorized set cannot share any entanglement with the dealer.\\
\indent We note that in this scheme every access structure is monotone i.e.\ any set of shares including an authorized set
 is also authorized. Also the complement of any authorized set must be unauthorized, due to
{\it monogamy of entanglement}~\cite{CKW00,OV06},
which states that if two systems are maximally entangled they cannot be
 entangled with any third system.  This can be expressed, for example, via the inequality
\begin{equation}
	E(\rho^{XY})+E(\rho^{XZ})\leq E(\rho^{X(YZ)})
\end{equation}
for any composite system of $X$, $Y$ and $Z$. This inequality holds for some entanglement measures such as the one-way distillable
entanglement and the squashed entanglement~\cite{KW04, KGS12}. \\
\indent Let $X$ be the dealer, $Y$ be an authorized set and $Z$ be the complement of $Y$. As $\rho^{XY}$ can be transformed into $\rho^{X(YZ)}$
by LOCC (local operations and classical communication) with respect to the dealer/player divide, and vice versa, $\rho^{XY}$ must have the same
amount of entanglement with respect to this divide as $\rho^{X(YZ)}$. Therefore, the complement of any authorized set is an unauthorized set,
i.e., $E(\rho^{XZ})=0$.
It similarly follows that, due to the monogamy of entanglement, an access structure in an
ESS cannot have two disjoint
subsets, since both such subsets could simultaneously recover maximal entanglement with the dealer.  This is analogous to quantum secret
sharing not allowing an access structure to have two disjoint subsets due to the no-cloning theorem.

\subsection{Properties}
\label{subsec:properties}
We derive the upper bound of the recoverable entanglement using the {\it non-lockability} of the relative entropy of entanglement~\cite{HHHO05,PV07} to obtain Theorem \ref{theorem1}. Let the share of a player $r$ be an ``important'' share if there is an unauthorized set $T$ such that $A=T\cup \{r\}$ is authorized.
\begin{thm}\label{theorem1}
Every ESS satisfies
\begin{equation}\label{thmeq}
E_R(\rho^{\text{DP}})\leq2q
\end{equation}
where $E_R$ is the relative entropy of entanglement, $\rho^{\text{DP}}$ is any state shared between a dealer~D and a set of players~P, and $q$ is the size of the smallest important share (i.e. $q$ is the number of qubits in that share).
\end{thm}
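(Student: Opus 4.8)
The plan is to reduce the bound on the total entanglement $E_R(\rho^{\text{DP}})$ to a bound on the entanglement of a single authorized set, and then invoke non-lockability so that only the qubits of one share are ``paid for.'' First I would fix the important player $r$ whose share is of the minimal size $q$, and let $T$ be an unauthorized set with $A=T\cup\{r\}$ authorized; such a $T$ exists by the definition of an important share.

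The argument then rests on a chain of three inequalities. \emph{(i)} Since the dealer retains only half of the MES $|S\rangle$, her marginal $\rho^{\text{D}}$ is unchanged by the encoding and has entropy $S(\rho^{\text{D}})=E_R(|S\rangle\langle S|)$; using the standard bound $E_R(\rho^{AB})\le S(\rho^{A})$ (which follows from $E_R\le E_F\le S(\rho^A)$) gives $E_R(\rho^{\text{DP}})\le S(\rho^{\text{D}})$. \emph{(ii)} Because $A$ is authorized, the recoverability condition lets the players in $A$ reconstruct $|S\rangle$ by local operations, and monotonicity of $E_R$ under local operations forces $E_R(\rho^{\text{DA}})\ge E_R(|S\rangle\langle S|)=S(\rho^{\text{D}})$. \emph{(iii)} Viewing $\rho^{\text{DA}}$ as a tripartite state on dealer$/T/r$ and discarding player $r$, whose share spans a space of dimension $2^{q}$, non-lockability of the relative entropy of entanglement yields
\begin{equation}
E_R(\rho^{\text{DA}})-E_R(\rho^{\text{DT}})\le 2q.
\end{equation}
The secrecy condition says the unauthorized set $T$ shares no entanglement with the dealer, so $\rho^{\text{DT}}$ is separable across the dealer/player cut and $E_R(\rho^{\text{DT}})=0$. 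Combining these gives $E_R(\rho^{\text{DP}})\le S(\rho^{\text{D}})\le E_R(\rho^{\text{DA}})\le 2q$, which is the claim.

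The steps I expect to be delicate are the translations between the physical conditions and the entanglement measure: showing that recoverability pushes $E_R(\rho^{\text{DA}})$ all the way up to the full MES value, and that secrecy makes $E_R(\rho^{\text{DT}})$ vanish \emph{exactly} rather than merely being small. The non-lockability input is then a direct application, but it is the crucial one, since the whole point is that discarding a single share costs only $2q$ rather than a quantity scaling with the size of $A$; identifying the discarded dimension as precisely $2^{q}$, the size of the minimal important share, is what produces the stated bound.
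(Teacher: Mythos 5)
Your proposal is correct and follows essentially the same route as the paper: isolate an unauthorized set $T$ plus the smallest important share, apply non-lockability of $E_R$ to bound the drop upon discarding that share by $2q$, and use the secrecy condition to set $E_R(\rho^{\text{DT}})=0$. The only differences are presentational — the paper derives the non-lockability step explicitly by twirling the share with $\{I,X,Y,Z\}^{\otimes q}$ and invoking the inequality $\sum_i p_i E_R(\rho_i)-E_R(\sum_i p_i\rho_i)\leq S(\sum_i p_i\rho_i)-\sum_i p_i S(\rho_i)$, where you cite it as a black box, and the paper obtains $E_R(\rho^{\text{DP}})=E_R(\rho^{\text{DA}})$ directly from LOCC-interconvertibility rather than via your sandwich through $S(\rho^{\text{D}})$.
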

\begin{proof}
Let $\Gamma$ be an unauthorized set satisfying $A = \Gamma\cup \{u\}$ where $A$ is an authorized set and $\{u\}\notin\Gamma$ is the player who possesses the smallest important share, and let $\rho^{\text{D}\Gamma u}$ be the reduced density matrix of joint system of $A$ and the dealer. Assume that the
total dephasing (i.e., twirling) is performed on the smallest important share. That is, the unitary operators $g_i^u\in\{I,X,Y,Z\}^{\otimes q}$ are applied to $u$ with equal probabilities. Then relative entropy of entanglement $E_{R}$ satisfies
\begin{align}
\label{property}
	\sum_i p_i E_R(\rho_i)-&E_R\left(\sum_i p_i\rho_i\right)\nonumber\\
		&\leq S\left(\sum_i p_i\rho_i\right)-\sum_i p_i S(\rho_i)
\end{align}
with
\begin{equation}
	\rho_i=(\one^{\text{D}\Gamma}\otimes g_i^u)\rho^{\text{D}\Gamma u}(\one^{\text{D}\Gamma}\otimes g_i^u)
\end{equation}
and $p_i=\frac{1}{4^q}$.
Due to total dephasing, we have
\begin{equation}
\sum_i p_i\rho_i=\text{tr}_u(\rho^{\text{D}\Gamma u})\otimes\frac{1}{2^k}\one^u.
\end{equation}
Hence, the player $u$ does not now contribute to the entanglement with the dealer because it is independent of the other players and the dealer.
Thus,
\begin{equation} E_R\left(\sum_i p_i\rho_i\right)=E_R(\text{tr}_u(\rho^{\text{D}\Gamma u})).\end{equation}

As the relative entropy of entanglement is invariant under local unitary transformations, we also have
\begin{equation}
\sum_i p_i E_R(\rho_i)=\sum_i p_i E_R(\rho^{\text{D}\Gamma u})=E_R(\rho^{\text{D}\Gamma u}).
\end{equation}
Using the above equations we can rewrite (\ref{property}) as
\begin{align}
	E_R(\rho^{\text{D}\Gamma u})-&E_R(\text{tr}_u(\rho^{\text{D}\Gamma u}))\nonumber\\
		&\leq S(\sum_i p_i\rho_i)-\sum_i p_iS(\rho_i)\nonumber\\
		& \leq H(p_i)=2q.
\end{align}
\indent As $\Gamma$ is an unauthorized set, its reduced density matrix $\text{tr}_u(\rho^{\text{D}\Gamma u})$ is not entangled with the dealer; i.e., $E_R(\text{tr}_u(\rho^{\text{D}\Gamma u}))=0$. On the other hand, $\rho^{\text{D}\Gamma u}$ has the same amount of entanglement as $\rho^{\text{DP}}$ because $\rho^{\text{D}\Gamma u}$ can be transformed into $\rho^{\text{DP}}$ by LOCC and vice versa. Therefore, we have $E_R(\rho^{\text{DP}})=E_R(\rho^{\text{D}\Gamma u})$.    \qedhere
\end{proof}
Let a player $r$ possess a one-qubit important share, where $T$ is an unauthorized set and $A=T\cup \{r\}$ is authorized.
As the set $A$ can recover the entangled state $\rho^{\text{DP}}$ initially shared between a dealer~D and a set of players~P, the
 state $\rho^{\text{DA}}$ has the same amount of entanglement as $\rho^{\text{DP}}$, but the amount of entanglement must go to zero by discarding the share $r$. According to~\cite{HHHO05}, the amount of entanglement can decrease at most by two upon discarding one qubit with respect to the
 relative entropy of entanglement.

Any bipartite entanglement measure has the same value for an MES~\cite{PV07}. Therefore, if a dealer~D shares an MES with a set of players~P, we have $E(\rho^{\text{DP}})=E_R(\rho^{\text{DP}})$. This leads to the following corollary to Theorem \ref{theorem1}.
\begin{cor}
For any entanglement measure $E$,
\begin{equation}E(\rho^{\text{DP}})\leq2q\end{equation}
where $\rho^{\text{DP}}$ is an initial MES and $q$ is the size of the smallest important share.
\end{cor}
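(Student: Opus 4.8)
The plan is to obtain the corollary directly from Theorem~\ref{theorem1}, exploiting the fact that the theorem's bound is stated for one \emph{particular} entanglement measure—the relative entropy of entanglement $E_R$—whereas the corollary's hypothesis restricts $\rho^{\text{DP}}$ to be maximally entangled. The key idea I would use is that on a maximally entangled state all admissible bipartite entanglement measures coincide, so the $E_R$-bound transfers verbatim to an arbitrary measure $E$, with no new analytic input required.

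Concretely, first I would record that $\rho^{\text{DP}}$ is, by the hypothesis of the corollary, maximally entangled across the dealer/player partition. Second, I would invoke the uniqueness property quoted above from~\cite{PV07}: for a (pure) maximally entangled state every suitably normalized bipartite entanglement measure returns the same number, namely the ebit content of the state. Since $E_R$ is one such measure, this common value is precisely $E_R(\rho^{\text{DP}})$, giving
\begin{equation}
E(\rho^{\text{DP}})=E_R(\rho^{\text{DP}}).
\end{equation}
Third, I would apply Theorem~\ref{theorem1} to the right-hand side to get $E_R(\rho^{\text{DP}})\leq 2q$, where $q$ is the size of the smallest important share. Chaining the equality with this inequality yields $E(\rho^{\text{DP}})\leq 2q$, completing the argument.

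The only real subtlety—rather than any computation—is pinning down what ``any entanglement measure'' means: the coincidence-on-MES statement needs $E$ to be normalized consistently with $E_R$ (for instance, LOCC-monotone and equal to the number of ebits on Bell states). I would therefore make this normalization assumption explicit. Once it is in place, the corollary follows immediately, because all the substantive work—the non-lockability / discard-one-qubit estimate that produces the factor $2q$—has already been absorbed into Theorem~\ref{theorem1}.
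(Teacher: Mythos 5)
Your proposal is correct and follows essentially the same route as the paper: the text preceding the corollary notes that all bipartite entanglement measures coincide on an MES, so $E(\rho^{\text{DP}})=E_R(\rho^{\text{DP}})$, and Theorem~\ref{theorem1} then gives the bound. Your added remark about requiring consistent normalization of $E$ is a reasonable clarification but does not change the argument.
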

We note here a distinct difference between an ESS and a QSS, despite the close analogies of the two processes;
in QSS the size of the secret can be no larger than that of the smallest share \cite{GOT00},
whereas in an ESS
(for which we find protocols that saturate this bound) one can share twice as much entanglement as the size of the smallest share.

\indent Just as in QSS, a natural starting point for constructing an ESS is to use an existing QECC,
as the recovery operation in an ESS is a form of correction of erasure errors (the missing shares of the players which are not part of the subset performing the recovery).  Thus, a given QECC naturally satisfies the necessary recovery condition for some ESS.
However, the secrecy condition for an ESS may be violated if the code
space yields unauthorized sets that share partially entangled states (i.e., neither maximally entangled nor separable) with the dealer.\\
\indent
As $[[n,k,d]]$ stabilizer codes are simple well-understood codes,
we focus on using such codes as the basis for seeking ESSs.
A dealer encodes a MES
\begin{equation}
	|S_\ell\rangle:=\frac{1}{\sqrt{\ell}}\sum_{j=0}^{\ell-1}|j\rangle|j\rangle
\end{equation}
via the mapping
\begin{equation}
	\left|S_{2^k}\right\rangle\mapsto|\Psi\rangle=\frac{1}{\sqrt{2^k}}\sum_{j=0}^{2^k-1}|j\rangle|C_j\rangle
\end{equation}
where the code space $C$ of the $[[n,k,d]]$ stabilizer code is spanned by $\{|C_j\rangle\}_{j=0\cdots 2^k-1}$
with each~$C_j$ a codeword of~$C$.
Each qubit of the code is taken as a share (i.e., $C=\mathscr{H}_{P_1}\otimes\mathscr{H}_{P_2}\otimes\cdots\otimes\mathscr{H}_{P_n}$ where $\mathscr{H}_{P_i}$ is the Hilbert space of the $i^\text{th}$ share).  \\
\indent In this scheme, an access structure is determined from the stabilizer~$S$ of the stabilizer code. Such a code can detect all errors~$E$ that are either in~$S$ or anti-commute with any element of~$S$~\cite{GOT97}.
If the stabilizer code can correct erasure errors on the set~$K$ of shares, the complement of~$K$ would be an authorized set that can reconstruct $|S_\ell\rangle$ through the recovery operation of the code. In this way we can determine an access structure as well as an adversary structure. \\
\indent In order to satisfy the secrecy condition of ESSs, we require that the reduced density matrix of every unauthorized set is separable with the dealer, which will not automatically be the case for a given QECC.  We find, for example, that ESSs can be constructed from Shor's nine-qubit code \cite{Sho95} (i.e., a $[[9,1,3]]$ stabilizer code) and from a $[[4,2,2]]$ stabilizer code~\cite{GOT97}.  In both these schemes the reduced density matrix of every unauthorized set is straightforward to write in a separable form with the dealer.
In general determining whether or not a bipartite density matrix is entangled
is NP-Hard~\cite{GUR03} so verifying the secrecy of an ESS is typically hard.

\subsection{Entanglement sharing using the [[4,2,2]] code}
We now give an explicit example of an ESS.  As shown below, the [[4,2,2]] stabilizer code leads to a threshold ESS (i.e., a scheme in which the access structure is determined solely by the number of players in each subset) due to the permutation invariance of the elements of its stabilizer.
Moreover, this scheme saturates the bound of Eq.~(\ref{thmeq}) as the size of shares is half of the size of initial entanglement.

We consider the case that a dealer~D holds half of $|S_4\rangle$
and encodes the other half into four shares $P=\{1,2,3,4\}$ according to
\begin{equation}\label{encoding4}
	|S_4\rangle\;\rightarrow\;|\Psi\rangle=\frac{1}{2}\sum_{j=0}^3|j\rangle_\text{D}|\beta_j\rangle_{12}|\beta_j\rangle_{34}
\end{equation}
with
\begin{align}
	|\beta_0\rangle=&\frac{|00\rangle+|11\rangle}{\sqrt{2}},\;
	|\beta_1\rangle=\frac{|00\rangle-|11\rangle}{\sqrt{2}},\nonumber\\
	|\beta_2\rangle=&\frac{|01\rangle+|10\rangle}{\sqrt{2}},\;
	|\beta_3\rangle=\frac{|01\rangle-|10\rangle}{\sqrt{2}},
\end{align}
the set of Bell states, where $\text{D}$ denotes the dealer's qubit and the index of $|\beta\rangle_{xy}$ refers to the $x^\text{th}$ and $y^\text{th}$ shares.\\
\indent As the stabilizer code can correct any single erasure error, $|S_4\rangle$ can be recovered from any three or more shares.
Thus, any single share cannot be entangled with the dealer, and we can show explicitly that any single share is in a product state with the dealer, of the form
\begin{equation}
	\frac{1}{4}\sum_{j}|j\rangle_\text{D}\langle j|\otimes\frac{1}{2}\one_{x}
\end{equation}
with $\one_{x}$ the identity matrix of a single share $x$. \\
\indent The subsets $\{1,2\}$ and $\{3,4\}$ each have a separable state with the dealer as is clear from Eq.~(\ref{encoding4}).
The following analysis shows that all other sets of two shares are also separable with the dealer. Note that, as found in \cite{GW10}, the indices of qubits in the state $|\beta_j\rangle_{12}|\beta_j\rangle_{34}$ can be re-ordered as
\begin{equation}
|\beta_j\rangle_{12}|\beta_j\rangle_{34}=\sum_k c_{jk}|\beta_k\rangle_{pq}|\beta_k\rangle_{rs}
\end{equation}
where $\{p,q,r,s\}$ is a permutation of $P=\{1,2,3,4\}$. By tracing out two arbitrary players $\{p,q\}$, one obtains the separable state
\begin{align}
\text{tr}_{pq}\Bigr(|\Psi\rangle\langle \Psi|\Bigl)
& =\frac{1}{4}\sum_k\Bigl(\sum_{j,j'}c_{jk}c_{j'k}|j\rangle_\text{D}\langle j'|\Bigr)\otimes|\beta_k\rangle_{rs}\langle\beta_k|\nonumber\\
& =\frac{1}{4}\sum_k|a_k\rangle_\text{D}\langle a_k|\otimes|\beta_k\rangle_{rs}\langle\beta_k|
\end{align}
where $|a_k\rangle=\sum_j c_{jk}|j\rangle_\text{D}$. \\

\subsection{Entanglement sharing using Shor's code}\label{subshor}
An ESS with a general (not threshold) access structure can be constructed from Shor's nine-qubit code. In this scheme, half of $|S_2\rangle$ is encoded into nine shares $P=\{1,2,3,\dots,9\}$ of 1 qubit, producing the joint state of dealer and players of
\begin{equation}\label{encoding9}
|\Psi\rangle=\frac{1}{\sqrt{2}}\sum_{j=0}^1|j\rangle_\text{D}|G_j\rangle_{123}|G_j\rangle_{456}|G_j\rangle_{789}.
\end{equation}
with
\begin{equation}
|G_0\rangle=\frac{1}{\sqrt{2}}(|000\rangle+|111\rangle)
\end{equation}
and
\begin{equation}
|G_1\rangle=\frac{1}{\sqrt{2}}(|000\rangle-|111\rangle).
\end{equation}
We note that the encoding is not fully symmetric between shares, but divides the players into three ``triplets'' $\{\{1,2,3\}, \{4,5,6\}, \{7,8,9\}\}$.

The nine-qubit Shor code~\cite{Sho95}
can correct erasure errors on any of those subsets $K\subseteq P$ which consist of
\begin{enumerate}
 \item any one or two qubits,
 \item two qubits in one triplet and a single qubit in another triplet (e.g., $\{1,2,4\}$), or
 \item two qubits in one triplet and two qubits in another triplet (e.g., $\{1,2,4,5\}$).
\end{enumerate}
Therefore, the complement of $K$ can recover $|S_2\rangle$ using error correction for the 9-qubit code, and $K$ cannot be entangled with the dealer due to monogamy. \\
\indent The remaining subsets of players $\{B\}$, consisting neither of subsets $K$ nor their complements, are separable with the dealer.
For example, the reduced density matrix of $\{1,2,3,4,5,6\}$ can be written as
\begin{align}\label{789}
\text{tr}_{789}&\Bigl(|\Psi\rangle\langle\Psi|\Bigr)\nonumber\\
&=\frac{1}{2}\sum_{j=0}^{1}|j\rangle\langle j|_\text{D}\otimes|G_j\rangle_{123}\langle G_j|\otimes|G_j\rangle_{456}\langle G_j|.
\end{align}
In a similar way the other subsets in $\{B\}$ are also separable with the dealer, as shown in the Appendix \ref{app1}. \\

\subsection{Non-perfect entanglement sharing}
\indent As shown below, not all stabilizer codes satisfy the secrecy condition for standard entanglement sharing, although we can also consider them as a ``non-perfect'' case, in which we allow some partial entanglement between unauthorized subsets and the dealer.  For example, a [[6,4,2]] stabilizer code~\cite{GOT97} yields a non-perfect ESS. Suppose that a dealer encodes an MES of four ebits by the $[[6,4,2]]$ code such that
\begin{equation}
\frac{1}{4}\sum_{i=1}^{16}|i\rangle|i\rangle\mapsto \frac{1}{4}\sum_{i=1}^{16}|i\rangle|\beta_{f(i)}\rangle|\beta_{g(i)}\rangle|\beta_{f(i)+g(i)}\rangle
\end{equation}
where $f(i),\,g(i)=\{0,1,2,3\}$ and $\{|\beta_j\rangle\}$ are the Bell states.\\
\indent Any five players can recover the original entanglement as this stabilizer code can correct erasure errors on a single qubit. However, any four players have some entanglement with the dealer even though they are an unauthorized set, as can be seen in two ways. First the reduced density matrix of four players violates the positive partial transpose criterion~\cite{PER96,HHH96}, which provides one useful way to check state separability in ESSs when separability is not obvious from the density matrices. Second, the amount of initially shared entanglement is too large to satisfy Theorem \ref{theorem1} (i.e., $E_R(\rho^{\text{DP}})=4$ and $q=1$). Therefore, some entanglement must remain even after one player is excluded from an authorized set.  \\

\section{Hybrid Entanglement Sharing}

\indent A QECC that does not already satisfy the secrecy condition can potentially be made to do so through hybridization with classical information.  Similarly to hybrid QSS \cite{NMI01,FG12},
hybrid ESSs can be implemented from any QECC combined with classical secret sharing, by ``locking'' any leaked entanglement from recovery by first encrypting the dealer's MES using classical keys (i.e., bit strings) and then distributing the keys among a set of players in such a way that unauthorized sets are totally denied any access to entanglement. \\
\indent The principle of encrypting quantum information with classical keys is shown in quantum teleportation~\cite{BBC93}. Alice generates two classical bits by performing a joint measurement on a quantum state $|\psi\rangle$ that she intends to teleport to Bob and half of a previously shared MES. In the absence of Alice's classical bits, Bob's qubit is left in a maximally mixed state,
\begin{equation}
\frac{1}{4}(|\psi\rangle\langle\psi|+X|\psi\rangle\langle\psi|X+Z|\psi\rangle\langle\psi|Z+XZ|\psi\rangle\langle\psi|ZX).
\end{equation}
However, with the knowledge of the classical bits, Bob can recover $|\psi\rangle$ by determining which state of the above mixture his qubit is in. In a similar sense we can encrypt and decrypt entanglement using classical keys. \\

Let us consider an MES
\begin{equation}
	|E\rangle=\frac{1}{\sqrt{q}}\sum_{j=0}^{q-1}|a_j\rangle|b_j\rangle
\end{equation}
in a bipartite system of two $q$-dimensional Hilbert spaces. We can partially encrypt $|E\rangle$ by a unitary mapping
\begin{equation}
U^l:|E\rangle\mapsto|E^l\rangle=\frac{1}{\sqrt{q}}\sum_{j=0}^{q-1}\omega_{jl}|a_j\rangle|b_j\rangle
\end{equation}
with
\begin{equation}
	\omega_{jl}=\exp\left(\frac{2i\pi jl}{q}\right),\;l\in\mathbf{Z}_{q},
\end{equation}
randomly chosen. Note that the phase information for the encrypted state $|E^l\rangle$ is totally randomized for any party without knowledge of the classical key $l$, and thus the randomized state can be written in terms of separable states as follows:
\begin{align}
	\rho^S &=\sum_{l=0}^{q-1}\left|E^l\right\rangle\left\langle E^l\right|\nonumber\\
&=\frac{1}{q}\sum_{j,j'=0}^{q-1}\delta(j-j')|a_j\rangle\langle a_{j'}|\otimes |b_j\rangle\langle b_{j'}|\nonumber\\
&=\frac{1}{q}\sum_{j=0}^{q-1}|a_j\rangle\langle a_j|\otimes|b_j\rangle\langle b_j|.
\end{align}
\indent Hybrid ESSs can be implemented by the following procedure: Suppose that a dealer encodes an MES $|S\rangle$ into the code space $C=\text{span}\{C_j\}_{j=0\cdots q-1}$ of a QECC and thereby obtains
\begin{equation}
|\Psi\rangle=\frac{1}{\sqrt{q}}\sum_{j=0}^{q-1}|j\rangle|C_j\rangle.
\end{equation}
The access structure $Q$ for the corresponding ESS is determined by the code's ability to recover from erasure errors (for example an $[[n,k,d]]$ stabilizer code corresponds to an access structure $Q$ consisting of any $n-d+1$ or more shares).  The dealer additionally performs the following steps:
\begin{enumerate}
 \item Performs $U^l$ on $|\Psi\rangle$ using a randomly chosen classical key $l$.
 \item Encodes $l$ in classical shares and distributes using a classical secret sharing scheme with an access structure $Q'\subseteq Q$.
\end{enumerate}
We thus have a scheme with both classical and quantum shares, which are separately distributed among the $n$ players. If a set of players is an element of $Q'$, it can recover $|S\rangle$ with the classical key $l$. Otherwise they are left in a separable state with a dealer because they cannot acquire any information about $l$. We note that suitable classical secret sharing schemes can be easily devised by using polynomial functions~\cite{Sha79,Bla79}.\\

\section{Describing partial information in QRSS}

\indent  As described earlier, we can consider both the ``perfect'' ESS where unauthorized sets are denied all entanglement and the ``imperfect'' case where this secrecy condition is relaxed and some entanglement is leaked.  A similar framework applies to leaked information in imperfect QSS, known as quantum ramp secret sharing (QRSS).
In QRSS, one considers three types of player structures: access, forbidden and intermediate structures. The forbidden structure is the collection of unauthorized sets that are completely denied any information about the secret, and the intermediate structure is the collection of unauthorized sets that obtain some information about a secret.

As QRSS allows for smaller player shares than perfect QSS, this information leakage could be an acceptable sacrifice,
but, for a given secret sharing scenario, what constitutes acceptable leakage likely depends on the nature of the information that is accessible to the intermediate subsets.
One method to characterize information leakage requires stabilizer encoding
and therefore requires precise details of encoding operation (e.g., stabilizers) to generate the information group~\cite{GLG10}
so is not of direct use here.

\indent Entanglement sharing gives us an alternative way to characterise this information, which may be useful in many circumstances.  In a perfect ESS,  we can similarly consider three structures of players with respect to correlations with the dealer, by dividing the adversary structure into a forbidden structure and an intermediate structure.  Whereas all members of the adversary structure recover no entanglement with the dealer, we can distinguish between forbidden subsets (who can recover only product states with the dealer) and intermediate subsets, who can recover non-product mixed states with the dealer; i.e., they and the dealer can share {\it classical correlations} (but no entanglement).  Thus player subsets can be divided into those recovering ``quantum correlations'' (entanglement), classical correlations, or no correlations with the dealer.
This can be a useful qualitative description of leaked information.

One can characterise the qualitative difference by considering, for example, a circuit implementing quantum teleportation~\cite{BBC93}, which outputs a density matrix associated with an input state. Given an MES as a resource, this circuit can be successfully operated and thus the output density matrix is identical to the input state. However, if a product state is instead used as the resource, the circuit outputs the identity matrix. If a separable state is used as a resource, quantum teleportation cannot be perfectly achieved but will output a density matrix with some correlation with the input state.

For example, suppose that an arbitrary quantum state $|\phi_{S}\rangle=\alpha|0\rangle+\beta|1\rangle$ is the input state of the quantum teleportation circuit, and a separable state
\begin{equation}
\frac{1}{2}\Bigl(|0\rangle\langle0|\otimes|0\rangle\langle0|
+|1\rangle\langle1|\otimes|1\rangle\langle1|\Bigr)
\end{equation}
is used as its resource. Then the circuit outputs
\begin{equation}
\rho_{R}=\lvert\alpha\rvert^2|0\rangle\langle0|+\lvert\beta\rvert^2|1\rangle\langle1|.
\end{equation}
The output density matrix of this ``classical teleportation'' contains some information about the diagonal elements of the density matrix $\rho_{S}=|\phi_{S}\rangle\langle\phi_{S}|$ but nothing about its off-diagonal elements. In this context we can characterize leaked information about the secret in a QRSS scheme as ``quantum'' or ``classical'' in the sense of requiring a quantum or only a classical channel to transmit.\\
\indent By considering QRSS schemes as ESSs (applying the scheme to a secret of half of an MES, with the other half retained by the dealer), we therefore have a means of characterizing this information: a unauthorized subset of players who can recover a partially-entangled state with the dealer have some leaked quantum information, whereas those recovering a separable state only have leaked classical information.  This method of classifying information in the context of quantum teleportation provides one unambiguous definition of leakage of specifically quantum information, as it relies only on the clear definitions of entangled and separable states.  Since quantum correlations are useful resources in many contexts, a QRSS scheme may be acceptable if, for example, the only information leaked is classical.\\
\indent Now let us revisit the $[[4,2,2]]$ stabilizer code. The $[[4,2,2]]$ stabilizer code is not only used to devise an ESS but also for a $(3,2,4)$ QRSS scheme~\cite{GLG10} (where $(3,2,4)$ denotes a four-player scheme in which 3-player subsets are authorized, whereas $3-2=1$ player subsets receive no information about the secret, and two-player subsets receive partial information). Suppose that the ESS using the $[[4,2,2]]$ stabilizer code is applied to supply an MES required for quantum teleportation. If an arbitrary quantum state $\rho_\text{D}$ is transmitted to four players, any three or more players can receive the state perfectly with the recovered MES, and any single player can acquire no information. However, any two players can have a separable state with the dealer and learn the information about diagonal elements of $\rho_\text{D}$. Therefore, the $[[4,2,2]]$ stabilizer code yields a QRSS scheme that leaks only classical information to unauthorized sets, which corresponds to the result of~\cite{GLG10}.\\
\indent Formally, we propose a secrecy condition for QRSS in terms of entanglement sharing as follows.
\begin{defn}
Given an encoding operation $C:\mathscr{H}_\text{D}\rightarrow\mathscr{H}_\text{P}$ mapping
\begin{equation}
\begin{split}
(\one_\text{D}\otimes C) : |S\rangle=\frac{1}{\sqrt{|C|}}\sum_{j}|j\rangle_\text{D}|j\rangle_\text{D}\rightarrow \\ |\Psi\rangle=\frac{1}{\sqrt{|C|}}\sum_{j}|j\rangle_\text{D}|C_j\rangle_\text{P},
\end{split}
\end{equation}
a QRSS scheme described by $C$ is secure from the leakage of quantum information if the reduced density matrix for every intermediate set $\Gamma\in P$, $$\rho^\Gamma=\text{tr}_{\bar{\Gamma}}(|\Psi\rangle\langle\Psi|),$$ is separable with the system $\text{D}$ for $\bar{\Gamma}$ the complement of~$\Gamma$.
\end{defn}

\section{Conclusions}

\indent We have introduced a new protocol for entanglement sharing, which allows a dealer to distribute half of a MES with a set of players in such a way that some collaborating groups of players can recover the entangled state fully, but other groups cannot share any entanglement with the dealer.  While closely related to QSS (every perfect QSS scheme will also be an entanglement sharing scheme) the entanglement sharing conditions result in different properties for these schemes (for example share size of each player need only be at least half the amount of initial entanglement).  We have demonstrated examples of construction of entanglement sharing schemes from stabilizer codes, both directly and through hybridization with classical encryption.  We note that the general relationship between QECCs and entanglement sharing (i.e., whether or not schemes can be directly constructed from a given QECC) is still not evident,
and therefore is a promising avenue for further investigation.

Finally, we have shown that the entanglement sharing paradigm provides a useful characterization of information leakage in QRSS schemes, wherein any QECC suitable for perfect entanglement sharing can be used to construct a QRSS in which only classical information is leaked to unauthorized player subsets. As with sharing quantum secrets, entanglement sharing would be especially convenient if one could choose the access, intermediate, and adversary structure first and then find a corresponding code, but in practice the codes are chosen first and the structures are consequential. Given the many contexts in which entanglement is a crucial resource for performing quantum information protocols, this characterization, and entanglement sharing schemes in general, have the potential for a wide range of applications.

\begin{acknowledgments}
We appreciate valuable discussions with Vlad Gheorghiu and Yun-jiang Wang.
This project has been supported by Alberta Innovates Technology Futures, AITF, CIFAR and NSERC.
\end{acknowledgments}

\appendix*
\section{Investigation of Shor's code}\label{app1}
In this appendix, we supplement Sec. \ref{subshor} by investigating other sets in $\{B\}$.
As $\{1,2,3,4,5,6\}$ is in a separable state with a dealer, $\{1,2,3,7,8,9\}$ and $\{4,5,6,7,8,9\}$ are also separable with
the dealer as shown by permutation of the triplets. Similarly the complement of $\{1,2,3,4,5,6\}$, namely $\{7,8,9\}$, is in a separable state
\begin{equation}
\frac{1}{2}\sum_{j=0}^{1}|j\rangle_\text{D}\langle j|\otimes|G_j\rangle_{789}\langle G_j|
\end{equation}
and thus $\{1,2,3\}$ and $\{4,5,6\}$ are also separable with the dealer. We regard these six subsets as a class of $\{1,2,3\}$. \\
\indent Every set in $\{B\}$ is classified into four classes: $\{1,2,3\}$, $\{1,4,7\}$, $\{1,2,3,4\}$ and $\{1,4,7,8\}$. For each class, it is enough to demonstrate whether or not the
reduced density matrices of a small subset (e.g., $\{1,2,3\}$) and a big subset (e.g., $\{1,2,3,4,5,6\}$) are separable with the dealer. \\
\indent Let us look at the class of $\{1,4,7\}$. The complement of $\{1,4,7\}$ has a separable state written as
\begin{equation}\label{147}
\frac{1}{2}\Bigl(|+\rangle_\text{D}\langle +|\otimes \rho_{235689}+ |-\rangle_\text{D}\langle -|\otimes\rho'_{235689}\Bigr)
\end{equation}
with
\begin{align}
	|{\pm}\rangle_\text{D}:=&\frac{1}{\sqrt{2}}(|0\rangle_\text{D}\pm|1\rangle_\text{D}),\nonumber\\
	\rho_{235689}=&\frac{1}{4}\Bigl(|000000\rangle\langle000000|+|001111\rangle\langle001111|\nonumber\\
&+|110011\rangle\langle110011|+|111100\rangle\langle111100|\Bigr),\nonumber\\
\rho'_{235689}&=\frac{1}{4}\Bigl(|000011\rangle\langle000011|+|001100\rangle\langle001100|\nonumber\\
&+|110000\rangle\langle110000|+|111111\rangle\langle111111|\Bigr).
\end{align}

If we discard three more qubits $\{2,5,8\}$ from the state in Eq.~(\ref{147}), the resulting density matrix is a reduced density matrix of $\{3,6,9\}$ in this class and still has a separable form with the dealer.

Next we consider the class of $\{1,2,3,4\}$. The reduced density matrix for $\{5,6,7,8,9\}$ (the complement of $\{1,2,3,4\}$) is given by
\begin{align}\label{56789}
\frac{1}{2}\Bigl(\sum_{j=0}^{1}|j\rangle_\text{D}\langle j|&\otimes|G_j\rangle_{789}\langle G_j|\Bigr)\nonumber\\
&\otimes\frac{1}{2}\Bigl(|00\rangle_{56}\langle00|+|11\rangle_{56}\langle11|\Bigr).
\end{align}
In this case, the subset $\{5,6\}$ is independent of the other qubits so the reduced density matrix for $\{6,7,8,9\}$ (a small set in the class of $\{1,2,3,4\}$) is separable with the dealer after tracing out player 5. \\
\indent Finally we consider the class of $\{1,4,7,8\}$. By tracing out $\{8\}$ and $\{5,8\}$ from the state of Eq. (\ref{147}),
clearly $\{2,3,5,6,9\}$ and $\{2,3,6,9\}$ in this class are separable with the dealer.
We therefore conclude that any subset of players in $\{B\}$  is in a separable state with the dealer.

\bibliography{ppbib}

\begin{thebibliography}{27}
\expandafter\ifx\csname natexlab\endcsname\relax\def\natexlab#1{#1}\fi
\expandafter\ifx\csname bibnamefont\endcsname\relax
  \def\bibnamefont#1{#1}\fi
\expandafter\ifx\csname bibfnamefont\endcsname\relax
  \def\bibfnamefont#1{#1}\fi
\expandafter\ifx\csname citenamefont\endcsname\relax
  \def\citenamefont#1{#1}\fi
\expandafter\ifx\csname url\endcsname\relax
  \def\url#1{\texttt{#1}}\fi
\expandafter\ifx\csname urlprefix\endcsname\relax\def\urlprefix{URL }\fi
\providecommand{\bibinfo}[2]{#2}
\providecommand{\eprint}[2][]{\url{#2}}

\bibitem[{\citenamefont{Shamir}(1979)}]{Sha79}
\bibinfo{author}{\bibfnamefont{A.}~\bibnamefont{Shamir}},
  \bibinfo{journal}{Commun. ACM} \textbf{\bibinfo{volume}{22}},
  \bibinfo{pages}{612} (\bibinfo{year}{1979}).

\bibitem[{\citenamefont{Blakley}(1979)}]{Bla79}
\bibinfo{author}{\bibfnamefont{G.~R.} \bibnamefont{Blakley}}, in
  \emph{\bibinfo{booktitle}{Proceedings of the National Computer Conference}}
  (\bibinfo{publisher}{AFIPS}, \bibinfo{address}{Montvale, NJ},
  \bibinfo{year}{1979}), vol.~\bibinfo{volume}{48}, pp.
  \bibinfo{pages}{313--317}.

\bibitem[{\citenamefont{Hillery et~al.}(1999)\citenamefont{Hillery, Bu\v{z}ek,
  and Berthiaume}}]{HBB99}
\bibinfo{author}{\bibfnamefont{M.}~\bibnamefont{Hillery}},
  \bibinfo{author}{\bibfnamefont{V.}~\bibnamefont{Bu\v{z}ek}},
  \bibnamefont{and}
  \bibinfo{author}{\bibfnamefont{A.}~\bibnamefont{Berthiaume}},
  \bibinfo{journal}{Phys. Rev. A} \textbf{\bibinfo{volume}{59}},
  \bibinfo{pages}{1829} (\bibinfo{year}{1999}).

\bibitem[{\citenamefont{Cleve et~al.}(1999)\citenamefont{Cleve, Gottesman, and
  Lo}}]{CGL99}
\bibinfo{author}{\bibfnamefont{R.}~\bibnamefont{Cleve}},
  \bibinfo{author}{\bibfnamefont{D.}~\bibnamefont{Gottesman}},
  \bibnamefont{and} \bibinfo{author}{\bibfnamefont{H.-K.} \bibnamefont{Lo}},
  \bibinfo{journal}{Phys. Rev. Lett.} \textbf{\bibinfo{volume}{83}},
  \bibinfo{pages}{648} (\bibinfo{year}{1999}).

\bibitem[{\citenamefont{Markham and Sanders}(2008)}]{MS08}
\bibinfo{author}{\bibfnamefont{D.}~\bibnamefont{Markham}} \bibnamefont{and}
  \bibinfo{author}{\bibfnamefont{B.~C.} \bibnamefont{Sanders}},
  \bibinfo{journal}{Phys. Rev. A} \textbf{\bibinfo{volume}{78}},
  \bibinfo{pages}{042309} (\bibinfo{year}{2008}).

\bibitem[{\citenamefont{Beimel}(2011)}]{Bei11}
\bibinfo{author}{\bibfnamefont{A.}~\bibnamefont{Beimel}}, in
  \emph{\bibinfo{booktitle}{Coding and Cryptology}}, edited by
  \bibinfo{editor}{\bibfnamefont{Y.}~\bibnamefont{Chee}},
  \bibinfo{editor}{\bibfnamefont{Z.}~\bibnamefont{Guo}},
  \bibinfo{editor}{\bibfnamefont{S.}~\bibnamefont{Ling}},
  \bibinfo{editor}{\bibfnamefont{F.}~\bibnamefont{Shao}},
  \bibinfo{editor}{\bibfnamefont{Y.}~\bibnamefont{Tang}},
  \bibinfo{editor}{\bibfnamefont{H.}~\bibnamefont{Wang}}, \bibnamefont{and}
  \bibinfo{editor}{\bibfnamefont{C.}~\bibnamefont{Xing}}
  (\bibinfo{publisher}{Springer Berlin Heidelberg}, \bibinfo{year}{2011}), vol.
  \bibinfo{volume}{6639} of \emph{\bibinfo{series}{Lecture Notes in Computer
  Science}}, pp. \bibinfo{pages}{11--46}, ISBN
  \bibinfo{isbn}{978-3-642-20900-0},
  \urlprefix\url{http://dx.doi.org/10.1007/978-3-642-20901-7_2}.

\bibitem[{\citenamefont{Nascimento et~al.}(2001)\citenamefont{Nascimento,
  Mueller-Quade, and Imai}}]{NMI01}
\bibinfo{author}{\bibfnamefont{A.~C.~A.} \bibnamefont{Nascimento}},
  \bibinfo{author}{\bibfnamefont{J.}~\bibnamefont{Mueller-Quade}},
  \bibnamefont{and} \bibinfo{author}{\bibfnamefont{H.}~\bibnamefont{Imai}},
  \bibinfo{journal}{Phys. Rev. A} \textbf{\bibinfo{volume}{64}},
  \bibinfo{pages}{042311} (\bibinfo{year}{2001}).

\bibitem[{\citenamefont{Bennett et~al.}(1993)\citenamefont{Bennett, Brassard,
  Cr\'epeau, Jozsa, Peres, and Wootters}}]{BBC93}
\bibinfo{author}{\bibfnamefont{C.~H.} \bibnamefont{Bennett}},
  \bibinfo{author}{\bibfnamefont{G.}~\bibnamefont{Brassard}},
  \bibinfo{author}{\bibfnamefont{C.}~\bibnamefont{Cr\'epeau}},
  \bibinfo{author}{\bibfnamefont{R.}~\bibnamefont{Jozsa}},
  \bibinfo{author}{\bibfnamefont{A.}~\bibnamefont{Peres}}, \bibnamefont{and}
  \bibinfo{author}{\bibfnamefont{W.~K.} \bibnamefont{Wootters}},
  \bibinfo{journal}{Phys. Rev. Lett.} \textbf{\bibinfo{volume}{70}},
  \bibinfo{pages}{1895} (\bibinfo{year}{1993}).

\bibitem[{\citenamefont{Bennett and Wiesner}(1992)}]{BW92}
\bibinfo{author}{\bibfnamefont{C.~H.} \bibnamefont{Bennett}} \bibnamefont{and}
  \bibinfo{author}{\bibfnamefont{S.~J.} \bibnamefont{Wiesner}},
  \bibinfo{journal}{Phys. Rev. Lett.} \textbf{\bibinfo{volume}{69}},
  \bibinfo{pages}{2881} (\bibinfo{year}{1992}).

\bibitem[{\citenamefont{Ac\'in et~al.}(2007)\citenamefont{Ac\'in, Brunner,
  Gisin, Massar, Pironio, and Scarani}}]{ABG07}
\bibinfo{author}{\bibfnamefont{A.}~\bibnamefont{Ac\'in}},
  \bibinfo{author}{\bibfnamefont{N.}~\bibnamefont{Brunner}},
  \bibinfo{author}{\bibfnamefont{N.}~\bibnamefont{Gisin}},
  \bibinfo{author}{\bibfnamefont{S.}~\bibnamefont{Massar}},
  \bibinfo{author}{\bibfnamefont{S.}~\bibnamefont{Pironio}}, \bibnamefont{and}
  \bibinfo{author}{\bibfnamefont{V.}~\bibnamefont{Scarani}},
  \bibinfo{journal}{Phys. Rev. Lett.} \textbf{\bibinfo{volume}{98}},
  \bibinfo{pages}{230501} (\bibinfo{year}{2007}).

\bibitem[{\citenamefont{Masanes et~al.}(2011)\citenamefont{Masanes, Pironio,
  and Ac\'in}}]{MPA11}
\bibinfo{author}{\bibfnamefont{L.}~\bibnamefont{Masanes}},
  \bibinfo{author}{\bibfnamefont{S.}~\bibnamefont{Pironio}}, \bibnamefont{and}
  \bibinfo{author}{\bibfnamefont{A.}~\bibnamefont{Ac\'in}},
  \bibinfo{journal}{Nat. Commun.} \textbf{\bibinfo{volume}{2}},
  \bibinfo{pages}{238} (\bibinfo{year}{2011}).

\bibitem[{\citenamefont{Gottesman}(2000)}]{GOT00}
\bibinfo{author}{\bibfnamefont{D.}~\bibnamefont{Gottesman}},
  \bibinfo{journal}{Phys. Rev. A} \textbf{\bibinfo{volume}{61}},
  \bibinfo{pages}{042311} (\bibinfo{year}{2000}).

\bibitem[{\citenamefont{Ogawa et~al.}(2005)\citenamefont{Ogawa, Sasaki,
  Iwamoto, and Yamamoto}}]{OSIY05}
\bibinfo{author}{\bibfnamefont{T.}~\bibnamefont{Ogawa}},
  \bibinfo{author}{\bibfnamefont{A.}~\bibnamefont{Sasaki}},
  \bibinfo{author}{\bibfnamefont{M.}~\bibnamefont{Iwamoto}}, \bibnamefont{and}
  \bibinfo{author}{\bibfnamefont{H.}~\bibnamefont{Yamamoto}},
  \bibinfo{journal}{Phys. Rev. A} \textbf{\bibinfo{volume}{72}},
  \bibinfo{pages}{032318} (\bibinfo{year}{2005}).

\bibitem[{\citenamefont{Coffman et~al.}(2000)\citenamefont{Coffman, Kundu, and
  Wootters}}]{CKW00}
\bibinfo{author}{\bibfnamefont{V.}~\bibnamefont{Coffman}},
  \bibinfo{author}{\bibfnamefont{J.}~\bibnamefont{Kundu}}, \bibnamefont{and}
  \bibinfo{author}{\bibfnamefont{W.~K.} \bibnamefont{Wootters}},
  \bibinfo{journal}{Phys. Rev. A} \textbf{\bibinfo{volume}{61}},
  \bibinfo{pages}{052306} (\bibinfo{year}{2000}).

\bibitem[{\citenamefont{Osborne and Verstraete}(2006)}]{OV06}
\bibinfo{author}{\bibfnamefont{T.~J.} \bibnamefont{Osborne}} \bibnamefont{and}
  \bibinfo{author}{\bibfnamefont{F.}~\bibnamefont{Verstraete}},
  \bibinfo{journal}{Phys. Rev. Lett.} \textbf{\bibinfo{volume}{96}},
  \bibinfo{pages}{220503} (\bibinfo{year}{2006}),
  \urlprefix\url{http://link.aps.org/doi/10.1103/PhysRevLett.96.220503}.

\bibitem[{\citenamefont{Koashi and Winter}(2004)}]{KW04}
\bibinfo{author}{\bibfnamefont{M.}~\bibnamefont{Koashi}} \bibnamefont{and}
  \bibinfo{author}{\bibfnamefont{A.}~\bibnamefont{Winter}},
  \bibinfo{journal}{Phys. Rev. A} \textbf{\bibinfo{volume}{69}},
  \bibinfo{pages}{022309} (\bibinfo{year}{2004}).

\bibitem[{\citenamefont{Kim et~al.}(2012)\citenamefont{Kim, Gour, and
  Sanders}}]{KGS12}
\bibinfo{author}{\bibfnamefont{J.~S.} \bibnamefont{Kim}},
  \bibinfo{author}{\bibfnamefont{G.}~\bibnamefont{Gour}}, \bibnamefont{and}
  \bibinfo{author}{\bibfnamefont{B.~C.} \bibnamefont{Sanders}},
  \bibinfo{journal}{Contemporary Physics} \textbf{\bibinfo{volume}{53}},
  \bibinfo{pages}{417} (\bibinfo{year}{2012}).

\bibitem[{\citenamefont{Horodecki et~al.}(2005)\citenamefont{Horodecki,
  Horodecki, Horodecki, and Oppenheim}}]{HHHO05}
\bibinfo{author}{\bibfnamefont{K.}~\bibnamefont{Horodecki}},
  \bibinfo{author}{\bibfnamefont{M.}~\bibnamefont{Horodecki}},
  \bibinfo{author}{\bibfnamefont{P.}~\bibnamefont{Horodecki}},
  \bibnamefont{and}
  \bibinfo{author}{\bibfnamefont{J.}~\bibnamefont{Oppenheim}},
  \bibinfo{journal}{Phys. Rev. Lett.} \textbf{\bibinfo{volume}{94}},
  \bibinfo{pages}{200501} (\bibinfo{year}{2005}).

\bibitem[{\citenamefont{Plenio and Virmani}(2007)}]{PV07}
\bibinfo{author}{\bibfnamefont{M.~B.} \bibnamefont{Plenio}} \bibnamefont{and}
  \bibinfo{author}{\bibfnamefont{S.}~\bibnamefont{Virmani}},
  \bibinfo{journal}{Quantum Inf. Comput.} \textbf{\bibinfo{volume}{7}},
  \bibinfo{pages}{1} (\bibinfo{year}{2007}).

\bibitem[{\citenamefont{Gottesman}(1997)}]{GOT97}
\bibinfo{author}{\bibfnamefont{D.}~\bibnamefont{Gottesman}},
  \bibinfo{journal}{Caltech Ph.D. thesis}  (\bibinfo{year}{1997}).

\bibitem[{\citenamefont{Shor}(1995)}]{Sho95}
\bibinfo{author}{\bibfnamefont{P.~W.} \bibnamefont{Shor}},
  \bibinfo{journal}{Phys. Rev. A} \textbf{\bibinfo{volume}{52}},
  \bibinfo{pages}{R2493} (\bibinfo{year}{1995}).

\bibitem[{\citenamefont{Gurvits}(2003)}]{GUR03}
\bibinfo{author}{\bibfnamefont{L.}~\bibnamefont{Gurvits}}, in
  \emph{\bibinfo{booktitle}{Proceedings of the 35th annual ACM symposium on
  Theory of computing}} (\bibinfo{publisher}{ACM}, \bibinfo{address}{New York,
  NY}, \bibinfo{year}{2003}), STOC '03, pp. \bibinfo{pages}{10--19}.

\bibitem[{\citenamefont{Gour and Wallach}({2010})}]{GW10}
\bibinfo{author}{\bibfnamefont{G.}~\bibnamefont{Gour}} \bibnamefont{and}
  \bibinfo{author}{\bibfnamefont{N.~R.} \bibnamefont{Wallach}},
  \bibinfo{journal}{{J. Math. Phys.}} \textbf{\bibinfo{volume}{{51}}}
  (\bibinfo{year}{{2010}}).

\bibitem[{\citenamefont{Peres}(1996)}]{PER96}
\bibinfo{author}{\bibfnamefont{A.}~\bibnamefont{Peres}},
  \bibinfo{journal}{Phys. Rev. Lett.} \textbf{\bibinfo{volume}{77}},
  \bibinfo{pages}{1413} (\bibinfo{year}{1996}).

\bibitem[{\citenamefont{Horodecki et~al.}(1996)\citenamefont{Horodecki,
  Horodecki, and Horodecki}}]{HHH96}
\bibinfo{author}{\bibfnamefont{M.}~\bibnamefont{Horodecki}},
  \bibinfo{author}{\bibfnamefont{P.}~\bibnamefont{Horodecki}},
  \bibnamefont{and}
  \bibinfo{author}{\bibfnamefont{R.}~\bibnamefont{Horodecki}},
  \bibinfo{journal}{Physics Letters A} \textbf{\bibinfo{volume}{223}},
  \bibinfo{pages}{1} (\bibinfo{year}{1996}).

\bibitem[{\citenamefont{Fortescue and Gour}(2012)}]{FG12}
\bibinfo{author}{\bibfnamefont{B.}~\bibnamefont{Fortescue}} \bibnamefont{and}
  \bibinfo{author}{\bibfnamefont{G.}~\bibnamefont{Gour}},
  \bibinfo{journal}{IEEE Trans. Inf. Th.} \textbf{\bibinfo{volume}{58}},
  \bibinfo{pages}{6659} (\bibinfo{year}{2012}).

\bibitem[{\citenamefont{Gheorghiu et~al.}(2010)\citenamefont{Gheorghiu, Looi,
  and Griffiths}}]{GLG10}
\bibinfo{author}{\bibfnamefont{V.}~\bibnamefont{Gheorghiu}},
  \bibinfo{author}{\bibfnamefont{S.~Y.} \bibnamefont{Looi}}, \bibnamefont{and}
  \bibinfo{author}{\bibfnamefont{R.~B.} \bibnamefont{Griffiths}},
  \bibinfo{journal}{Phys. Rev. A} \textbf{\bibinfo{volume}{81}},
  \bibinfo{pages}{032326} (\bibinfo{year}{2010}).

\end{thebibliography}

\end{document}